\newcommand{\ISN}[0]{{\mathsf{ISN}}\xspace}
\newcommand{\CISN}[0]{\mathcal{C-}{\mathsf{ISN}}\xspace}
\newcommand{\set}[1]{\{#1\}}                      
	\newcommand{\yslant}{0.5}
	\newcommand{\xslant}{-0.6}
\newtheorem{definition}{Definition}
\newtheorem{theorem}{Theorem}
\newtheorem{proposition}{Proposition}
\newtheorem{lemma}{Lemma}
\newtheorem{example}{Example}
\theoremstyle{nonumberplain}
\newtheorem{proof}{Proof}
\def\ps@pprintTitle{%
 \let\@oddhead\@empty
 \let\@evenhead\@empty
 \def\@oddfoot{}%
 \let\@evenfoot\@oddfoot}
\begin{document}

\title{Coordinating Multiagent Industrial Symbiosis}
\author{Vahid Yazdanpanah\fnref{myfootnote}}
\fntext[myfootnote]{Corresponding author. E-mail address: V.Yazdanpanah@soton.ac.uk (Vahid Yazdanpanah).}
\address{Department of Electronics and Computer Science, \\ University of Southampton, University Road, SO17 1BJ, Southampton, United Kingdom.}
\author{Devrim Murat Yazan}
\author{W. Henk M. Zijm}
\address{Department of Industrial Engineering and Business Information Systems, \\ University of Twente, Drienerlolaan 5, 7522 NB, Enschede, The Netherlands.}

\begin{abstract}
We present a formal multiagent framework for coordinating a class of collaborative industrial practices called ``Industrial Symbiotic Networks ($\ISN$s)'' as cooperative games. The  game-theoretic formulation of $\ISN$s enables systematic reasoning about what we call the $\ISN$ implementation problem. Specifically, the characteristics of $\ISN$s may lead to the inapplicability of standard  \emph{fair} and \emph{stable} benefit allocation methods. Inspired by realistic $\ISN$ scenarios and following the literature on normative multiagent systems, we consider \emph{regulations} and normative socio-economic \emph{policies} as coordination instruments that in combination with $\ISN$ games resolve the situation. In this multiagent system, employing Marginal Contribution Nets (MC-Nets) as rule-based cooperative game representations foster the combination of regulations and $\ISN$ games with no loss in expressiveness. We develop algorithmic methods for generating regulations that ensure the implementability of $\ISN$s and as a policy support, present the policy requirements that guarantee the implementability of all the desired $\ISN$s in a \emph{balanced-budget} way. 
\end{abstract}

\begin{keyword}
Multiagent Systems \sep Formal Methods \sep Industrial Symbiosis \sep Normative Coordination \sep Cooperative Games.
\end{keyword}

\maketitle

\section{Introduction}
Industrial Symbiotic Networks ($\ISN$s) are mainly seen as collaborative networks of industries with the aim to reduce the use of virgin material by circulating reusable resources (e.g., physical waste material and  energy) among the network members \cite{chertow2000industrial,lombardi2012redefining,yazan2016design}. In such networks, symbiosis leads to socio-economic and  environmental benefits for involved industrial agents and the society (see \cite{d2020environmental,shen2020can}). One barrier against stable $\ISN$ implementations is the lack of  frameworks able to secure such networks against unfair and unstable allocation of obtainable benefits among the involved industrial firms. In other words, although in general $\ISN$s result in the reduction of the total cost, a remaining challenge for operationalization of $\ISN$s is to tailor reasonable mechanisms for allocating the total obtainable cost reductions---in a  fair and stable manner---among the contributing firms. Otherwise, even if economic benefits are foreseeable, lack of stability and/or fairness may lead to non-cooperative  decisions. This will be the main focus of what we call the \emph{industrial symbiosis implementation} problem. Reviewing recent contributions in the field of industrial symbiosis research, we encounter  studies focusing  on the necessity to consider  interrelations between industrial enterprises \cite{yazan2016design,yazdanpanah2019fisof} and the role of contract settings in the process of $\ISN$ implementation \cite{albino2016exploring,BSE2020}. We believe that a missed element for shifting from \emph{theoretical} $\ISN$ design to \emph{practical} $\ISN$ implementation  is to model, reason about, and support $\ISN$ decision processes in a dynamic way (and not by using snapshot-based modeling frameworks). 

For such a multiagent setting, the mature field of cooperative game theory    provides rigorous  methodologies and established solution concepts, e.g. the core of the game and the Shapley allocation \cite{driessen2013cooperative,mas1995microeconomic,osborne1994course,borm2001operations}. However, for  $\ISN$s modeled as a cooperative game, these established solution concepts may be either non-feasible (due to properties of the game, e.g. being \emph{unbalanced}) or non-applicable (due to properties that the industrial domain asks for but solution concepts cannot ensure, e.g. individual as well as collective rationality). This calls for contextualized multiagent solutions that take into account both the complexities of $\ISN$s  	 and the characteristics of the employable game-theoretical solution concepts. Accordingly, inspired by realistic $\ISN$ scenarios and following the literature on normative multi-agent systems \cite{shoham1995social,grossi2013norms,andrighetto2013normative}, we consider \emph{regulative} rules and normative socio-economic \emph{policies} as two elements that in combination with $\ISN$ games result in the introduction of the novel concept of \emph{Coordinated} $\ISN$s ($\CISN$s)\footnote{See \cite{wooldridge2009introduction} for multiagent solution concepts in general and \cite{bussmann2013multiagent,fathalikhani2020government} for their application in the industrial domain.}. We formally present regulations as monetary incentive rules to enforce desired industrial collaborations with respect to an established policy. Regarding our representational approach, we use  Marginal Contribution Nets (MC-Nets) as rule-based cooperative game representations. This simply fosters the combination of regulative rules and $\ISN$ games with no loss in expressiveness. Accordingly, applying regulatory rules to $\ISN$s enables $\ISN$ policy-makers to transform $\ISN$ games and ensure the implementability of desired ones in a fair and stable manner. 

In this work, we provide a coordinated multiagent system---using MC-net cooperative games---for the implementation phase of $\ISN$s. Moreover, we develop algorithmic methods for generating regulations that ensure the implementability of  $\ISN$s. Finally, as a policy support, we show the $\ISN$ policy requirements that guarantee  the implementability of all the desired industrial collaborations in a \emph{balanced-budget} way.  

This paper is  structured as follows. Section \ref{sec:analysis} provides a conceptual analysis on $\ISN$s and allocation problems in such multiagent collaborative networks. Section \ref{sec:prelim} introduces preliminary formal notions and game theoretic solution concepts required for our $\ISN$ implementation framework. Sections \ref{sec:ISN_games} and \ref{sec:5} present our $\ISN$ frameworks and illustrate the verified results on effectivity of  the developed coordinated multiagent system for implementing industrial symbiosis. Finally, Section \ref{sec:conclusions} concludes the paper by highlighting the main contributions  and potential extensions of this work.

\section{Conceptual Analysis} \label{sec:analysis}
In this section, we (1) present the intuition behind our approach using a running example, (2) discuss our norm-based perspective for capturing $\ISN$ regulations, (3) describe the evaluation criteria for an ideal $\ISN$ implementation framework, and 4) review previous work on tailoring game-theoretic solution concepts for industrial symbiosis implementation problem.

\subsection{ISN as a Multiagent Practice}
To explain the dynamics of implementing $\ISN$s as multiagent industrial practices, we use a running example. Imagine three industries $i$, $j$, and $k$ in an industrial park such that $r_i$, $r_j$, and $r_k$ are among recyclable resources in the three firms' wastes, respectively. Moreover, $i$, $j$, and $k$ require $r_k$, $r_i$, and $r_j$ as their primary inputs, respectively. In such scenarios, discharging wastes and purchasing traditional primary inputs are transactions that incur cost. Hence, having the chance to reuse a material,  firms prefer recycling and transporting reusable resources to other enterprises if such transactions result in obtainable cost reductions for both parties---meaning that  it reduces the related costs for discharging wastes (on the resource provider side) and purchasing cost (on the resource receiver side). On the other hand, the implementation of such an industrial network involves transportation, treatment, and transaction costs. In principle, aggregating resource treatment processes using  refineries, combining transaction costs, and coordinating joint transportation may lead to   significant cost reductions at the collective level. 

What we call the  \emph{industrial symbiosis implementation} problem focuses on challenges---and accordingly seeks solutions---for sharing this \emph{collectively obtainable} benefit among the involved firms. Simply stated, the applied method for distributing the total obtainable benefit among involved agents is crucial while reasoning about implementing an $\ISN$. Imagine a scenario in which  symbiotic relations $ij$, $ik$, and $jk$, respectively  result in $4$, $5$, and $4$ utility units of benefit, the symbiotic network $ijk$ leads to $6$ units of benefit, and each agent can be involved in at most one symbiotic relation. To implement the $ijk$ $\ISN$, one main question is about the method for distributing the  benefit value $6$ among the three agents such that they all be induced to implement this $\ISN$. For instance, as $i$ and $k$ can obtain $5$ utils together, they will defect the $\ISN$ $ijk$ if we divide  the $6$ units of util equally ($2$ utils to each agent). Note that allocating benefit values  lower than their ``traditional"  benefits---that is obtainable in case firms defect the collaboration---results in unstable $\ISN$s.Moreover, unfair mechanisms that disregard the contribution of firms may cause the firms to move to other $\ISN$s that do so. In brief, even if an $\ISN$ results in sufficient cost reductions (at the collective level), its implementation and applied allocation methods determine whether it will be realized and maintained. Our main objective in this work is to provide a multiagent implementation framework for $\ISN$s  that enables fair and stable allocation of obtainable benefits. In further sections, we review two standard allocation methods, discuss their applicability for benefit-sharing in $\ISN$s, and introduce our normatively-coordinated multiagent system to guarantee stability and fairness in $\ISN$s.  

\subsection{ISN Regulations as socio-economic Norms} 
In real cases, $\ISN$s take place under regulations that concern environmental as well as societal policies. Hence, industrial agents have to comply to a set of rules. For instance, avoiding waste discharge may be encouraged (i.e., normatively promoted) by the local authority or transporting a specific type of hazardous waste may be forbidden (i.e., normatively prohibited) in a region. Accordingly, to nudge the collective behavior, monetary incentives in the form of subsidies and taxes are well-established solutions. This shows that the $\ISN$ implementation problem is not only about decision processes among strategic utility-maximizing industry representatives (at a microeconomic level) but in addition involves regulatory dimensions---such as presence of binding/encouraging monetary incentives (at a macroeconomic level). 

\begin{figure}[!htb]
\centering   

\begin{tikzpicture}[scale=.70,every node/.style={minimum size=.6cm},on grid]

	\begin{scope}[
		yshift=-120,
		every node/.append style={yslant=\yslant,xslant=\xslant},
		yslant=\yslant,xslant=\xslant
	] 
		
		\draw[black, dashed, thin] (0,0) rectangle (7,7); 
      		\draw[dotted, thin] (5,2) -- (2,2); 
           \draw[dotted, thin] (2,5) -- (2,2); 
          \draw[dotted, thin] (5,2) -- (2,5);

		\draw[fill=red]  
			(5,2) circle (.1) %
		(2,2) circle (.1) %
            			(2,5) circle (.1); %
\node[coordinate] (A) at (5,2) {};
\node[coordinate] (B) at (2,2) {};
\node[coordinate] (C) at (2,5) {};

		\fill[black]
			(0.1,6.5) node[right, scale=.7] {Microeconomic level: $\ISN$}	
			(5.1,1.9) node [below,scale=.7]{\textbf{Agent $i$}}
			(1.9,1.9) node [below,scale=.7]{\textbf{Agent $j$}}
            (1.9,5.1) node[above,scale=.7]{\textbf{Agent $k$}}

            ;
 
    \draw [rounded corners,fill=gray!20] (2.25,3.5)--(3.25,3.5)--(3.25,2.5)--cycle ;
   \node  (AM) at (3.0,3.25) {$A$};
     \draw[-latex, thin] (5,2.1) to [out=120,in=30] (3.25,3.5);
     \draw[-latex, thin] (2.1,2) to [out=10,in=-60] (3.25,2.5);
     \draw[-latex, thin] (2,4.9) to [out=-120,in=150] (2.25,3.5);

	\end{scope}

	\draw[ultra thin](3.8,4) to (3.8,-0.32);
	\draw[ultra thin](.8,2.4) to (.8,-1.8);
    \draw[ultra thin](-1.0,4.4) to (-1.0,0.28);

	\begin{scope}[
		yshift=0,
		every node/.append style={yslant=\yslant,xslant=\xslant},
		yslant=\yslant,xslant=\xslant
	]
		
		\fill[white,fill opacity=.75] (0,0) rectangle (7,7); 
		\draw[black, dashed, thin] (0,0) rectangle (7,7); 
      		\draw[dotted, thin] (5,2) -- (2,2); 
           \draw[dotted, thin] (2,5) -- (2,2); 
          \draw[dotted, thin] (5,2) -- (2,5);

		\draw [fill=red]
			(5,2) circle (.1) %
			(2,2) circle (.1) %
			(2,5) circle (.1) %
            	(5,5) circle (.25); %
		\fill[black]
			(0.1,6.5) node[right, scale=.7] {Macroeconomic level: Coordinated $\ISN$}
			(5.1,1.9) node[below,scale=.7]{\textbf{Agent $i$}}
			(1.9,1.9) node[below,scale=.7]{\textbf{Agent $j$}}
            (1.9,5.1) node[above,scale=.7]{\textbf{Agent $k$}}
             (5.00,4.2) node[below,scale=.65,rotate=27]{\footnotesize{Coordination Mechanism}}

                        (5.0,5.2) node[above,scale=.7]{\textbf{Regulatory Agent $R$}};

      \draw [rounded corners,fill=gray!20] (2.0,3.5)--(3.5,3.5)--(3.5,2)--cycle ;       
   \node  (AM) at (3.2,3.0) {$\hat{A}$};
  \draw[-latex, thin] (5.25,5.0) to [out=-30,in=30] (3.5,3.5);

	\end{scope} 
\end{tikzpicture}
  
 \caption{At the microeconomic  level, $A$ represents the set of all benefit allocation methods that are preferable for all the firms. At the macroeconomic level, due to the introduced coordination mechanism by the regulatory agent (respecting the established socio-economic policy), we have the allocation set $\hat{A}$ either equal to $A$ or as a shrunk/extended version of it.}

\label{fig:01}
\end{figure}
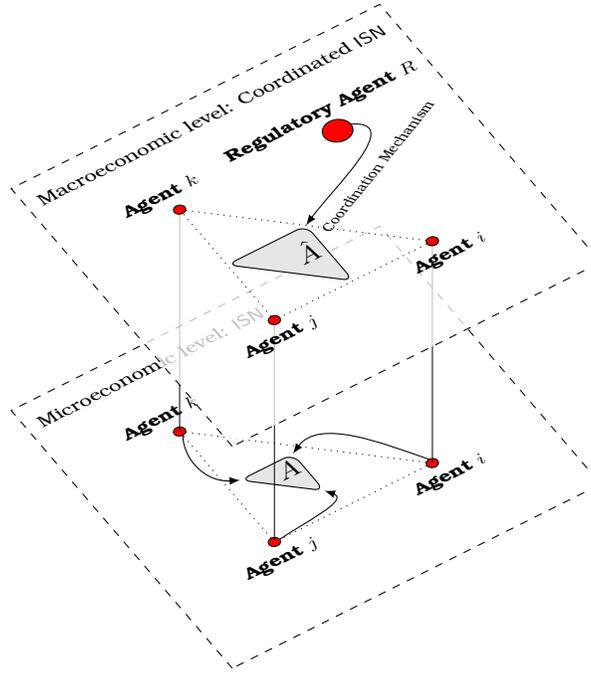

To capture the regulatory dimension of $\ISN$s, we apply a normative policy that respects the socio-economic as well as environmental desirabilities  and categorizes possible coalitions of industries in three classes of: \emph{promoted}, \emph{permitted}, and \emph{prohibited}. Accordingly, the regulatory agent respects this classification and allocates  incentives such that industrial agents will be induced to: implement promoted $\ISN$s and avoid prohibited ones (while permitted $\ISN$s are neutral from  the policy-maker's point of view).  For instance, in our $\ISN$ scenario, allocating $10$ units of incentive to $ijk$ and $0$ to other possible $\ISN$s induces all the rational agents to form the grand coalition and implement $ijk$---as they cannot benefit more in case they defect. We call the $\ISN$s that take place under regulations, \emph{Coordinated $\ISN$s} ($\CISN$s). Note that the term  ``coordination''  in this context refers to the application and efficacy of monetary incentive mechanisms in the $\ISN$ implementation phase, and should not be confused with $\ISN$ administration (i.e., managing the evolution of relations). Figure \ref{fig:01} presents a  schematic view on the role of the regulatory agents in $\CISN$s.

\subsection{Evaluation Criteria for ISN Implementation Frameworks} \label{sec:evaluation}

Dealing with firms that perform in a complex multiagent industrial context calls for implementation platforms that can be tuned to specific settings, can be scaled for implementing various $\ISN$ topologies,  do not require industries to sacrifice  financially, and allow industries to practice their freedom in the market. We deem that the quality of an $\ISN$ implementation framework should be evaluated by (1) \emph{ Generality} as the level of flexibility in the sense of independence from  agents' internal reasoning processes (i.e., how much the framework adheres to the principle of \emph{separation of concerns}), (2) \emph{Expressivity} as the level of scalability in the sense of independence from size and topology of the network, (3) \emph{Rationality} as the level that the employed allocation mechanisms comply to the collective as well as individual rationality axiom (i.e., the framework should assume that no agent (group) participates in a cooperative practice if they expect higher utility otherwise), and (4) \emph{Autonomy} as the level of allowance (i.e., non-restrictiveness) of the employed coordination mechanisms. Then an ideal framework for implementing $\ISN$s should be general---i.e., it should allow for manipulation  in the sense that the network designer  does not face any re-engineering/calibration burden---sufficiently expressive, rationally acceptable for all firms, and respect their autonomy. The goal of this paper is to develop an implementation framework for $\ISN$s that has properties close to the ideal one.

\subsection{Previous Work at a Glance}
The idea of employing cooperative game theory for analysis and implementation of  industrial symbiosis have only been sparsely explored \cite{grimes2007game,chew2009game,iesm2017}. In \cite{grimes2007game}, Grimes-Casey et al. used both cooperative and non-cooperative game theory for analyzing the behavior of firms engaged in a case-specific industrial ecology. While the analysis is expressive, the implemented relations are specific to refillable/disposable bottle life-cycles. In \cite{chew2009game}, Chew et al. tailored a mechanism for allocating costs among participating agents that expects an involved industry to ``bear the extra cost''. Although such an approach results in collective benefits, it is not in-line with the individual rationality axiom. In \cite{iesm2017}, Yazdanpanah and Yazan model bilateral industrial symbiotic relations as cooperative games and show that in such a specific class of symbiotic relations, the total operational costs can be allocated fairly and stably. Our work relaxes the limitation on the number of involved industries and---using the concept of Marginal Contribution Nets (MC-Nets)---enables a representation that is sufficiently expressive to capture the regulatory aspect of $\ISN$s. We will give a more detailed review of these papers in Section \ref{sec:revisit} after covering the technical background.

\section{Preliminaries} \label{sec:prelim}
In this section, we recall the preliminary notions in cooperative games, the MC-Net representation of such games, and the two principal solution concepts: the Shapley value and the Core\footnote{The presented material on basics in cooperative games is based on \cite{osborne1994course,mas1995microeconomic} while for the MC-Net notations, we build on \cite{ieong2005marginal,lesca2017coalition}.}. Moreover, we discuss in more detail the technical aspects of previous work that applied game-theoretical methods for $\ISN$ modeling and analysis. 

\subsection{Technical Background}
In this work, we build on the \emph{transferable utility} assumption in multiagent settings. This is to assume that the payoff to a group of agents involved in an $\ISN$ (as a cooperative practice) can be freely distributed among group members.  

\noindent \emph{Cooperative Games:} 
Multiagent cooperative games with transferable utility are often modeled by the tuple $(N,v)$, where $N$ is the finite set of agents and $v : 2^N \mapsto \mathbb{R}$ is the characteristic function that maps each possible agent group $S \subseteq N$ to a real-valued payoff $v(S)$. In such games, the so-called \emph{allocation problem} focuses on methods to distribute $v(S)$ among all the agents (in $S$) in a reasonable manner. That is, $v(S)$ is the result of a potential cooperative practice, hence ought to be distributed among agents in $S$ such that they all be induced to cooperate (or remain in the cooperation). Various solution concepts specify the utility each agent receives by taking into account properties like fairness and stability.  The two standard solution concepts that characterize fair and stable allocation of benefits are the Shapley value and the Core, respectively.  

\noindent \emph{Shapley Value:} 
The Shapley value prescribes a notion of \emph{fairness}. It says that assuming the formation of the grand coalition $N=\set{1,\dots,n}$, each agent $i \in N$ should receive its average marginal contribution  over all possible permutations of the agent groups. Let $s$ and $n$, represent the cardinality of $S$ and $N$, respectively. Then, the Shapley value of $i$ under characteristic function $v$, denoted by $\Phi_i(v)$, is formally specified as  $\Phi_i(v)=\sum_{S\subseteq N\setminus\set{i}} \frac{s! (n-s-1)!}{n!} (v(S\cup \{i\})-v(S))
$. For a game $(N,v)$, the unique list of real-valued payoffs $x=(\Phi_1(v),\cdots,\Phi_n(v)) \in \mathbb{R}^n$ is called the Shapley allocation for the game. The Shapley allocation have been extensively studied in the game theory  literature and satisfies various desired properties in  multi-agent practices. Moreover, it can be axiomatized using the following properties.

\begin{itemize}[leftmargin=*]
\item \emph{Efficiency (EFF):} The overall available utility $v(N)$ is allocated to the agents in $N$, i.e., $\sum\limits_{i \in N} \Phi_i(v)=v(N)$. 
\item \emph{Symmetry (SYM):} Any arbitrary agents $i$ and $j$ that make the same contribution receive the same payoff, i.e., $\Phi_i(v)=\Phi_j(v)$.
\item \emph{Dummy Player (DUM):} Any arbitrary agent $i$ of which  its marginal contribution to each group $S$ is the same, receives the payoff that it can earn on its own; i.e., $\Phi_i(v)=v(\{i\})$.
\item \emph{Additivity (ADD):} For any two cooperative games $(N,v)$ and $(N,w)$, $\Phi_i(u+w)=\Phi_i(v)+\Phi_i(w)$ for all $i \in N$, where for all $S \subseteq N$, the characteristic function  $v+w$ is defined as $(v+w)(S)=v(S)+w(S)$.  
\end{itemize}

In the following, we refer to an allocation that satisfies these properties as a \emph{fair} allocation. 

\noindent \emph{Core of the Game:} 
In core allocations, the focus is on the notion of \emph{stability}. In brief, an allocation is stable if no agent (group) benefits by defecting the cooperation. Formally, for a game $(N,v)$, any list of real-valued payoffs $x \in \mathbb{R}^n$ that satisfies the following conditions is a core allocation for the game:

\begin{itemize}[leftmargin=*]
\item \emph{Rationality (RAT):} $\forall S \subseteq N : \sum\limits_{i \in S} x_i \geq v(S)$ 
\item \emph{Efficiency (EFF):} $\sum\limits_{i \in N}x_i=v(N)$
\end{itemize}

One main question is whether for a given game, the core is non-empty (i.e., that there exists a stable allocation for the game). A game for which there exist a non-empty set of stable allocations should satisfy the \emph{balancedness} property, defined as follows. Let $1_S \in \mathbb{R}^n$ be the membership vector of $S$, where $(1_S)_i=1$ if $i \in S$ and $(1_S)_i=0$ otherwise. Moreover, let $(\lambda_S)_{S\subseteq N}$ be a vector of weights $\lambda_S \in [0,1]$. A vector  $(\lambda_S)_{S\subseteq N}$ is a \emph{balanced} vector if for all $i \in N$, we have that $\sum_{S \subseteq N} \lambda_S (1_S)_i=1$. Finally, a game is \emph{balanced} if for all balanced vectors of weights, we have that $\sum_{S \subseteq N} \lambda_S v(S) \leq v(N)$. According to the Bondereva-Shapley theorem, a game has a non-empty core if and only if it is balanced \cite{shapley1967balanced,bondareva1963some}.  

In the following, we refer to an allocation that satisfies RAT and EFF as a \emph{stable} allocation. 

\noindent \emph{Marginal Contribution Nets (MC-Nets):} 
Representing cooperative games by their characteristic functions (i.e., specifying values $v(S)$ for all the possible coalitions $S \subseteq N$) may become unfeasible in large-scale applications. In this work, as we are aiming to implement $\ISN$s in a scalable manner, we employ a  \emph{basic MC-Net} \cite{ieong2005marginal} representation that uses a set of rules to specify the value of possible agent coalitions. Moreover, attempting to capture the regulatory aspect of $\ISN$s makes employing rule-based game representations a natural approach. 

A basic MC-Net represents the cooperative game among agents in $N$ as a finite set of rules $\{\rho_i : (\mathcal{P}_i,\mathcal{N}_i) \mapsto v_i  \}_{i \in K}$, where $\mathcal{P}_i  \subseteq N$, $\mathcal{N}_i \subset N$, $\mathcal{P}_i \cap \mathcal{N}_i = \emptyset$, $v_i \in \mathbb{R} \setminus \{0\}$, and $K$ is the set of rule indices. For an agent coalition $S\subseteq N$, a rule $\rho_i$ is \emph{applicable} if $\mathcal{P}_i \subseteq S $ and $\mathcal{N}_i \cap S=\emptyset $ (i.e., $S$ contains all the agents in $\mathcal{P}_i$ and no agent in $\mathcal{N}_i$). Let $\Pi(S)$ denote the set of rule indices that are applicable to $S$. Then the value of $S$, denoted by $v(S)$, will be equal to $\sum_{i \in \Pi(S)} v_i$.  In further sections, we present an MC-Net representation of the $ijk$ $\ISN$ scenario and illustrate how this rule-based representation enables applying norm-based coordination to $\ISN$s. 

\subsection{Revisiting Previous Work} \label{sec:revisit}
Chew et al. in \cite{chew2009game} analyze the interaction of participating companies in an Eco-industrial park seeking to develop a game-theoretic implementation framework for inter-plant water integration. In their cooperative game model, by assuming the compliance of agents to their commitments, the optimum collective benefit is achievable. As the authors mention, in case the cooperation takes place, their allocation mechanism results in  higher collective payoff in comparison to their non-cooperative game scheme. This result is achieved through adding contextualized interaction protocols that compel  the industries to  act in a desired manner. Roughly speaking, it is  assumed that the network manager has control over internal operations and decision processes of involved agents (which may be applicable in specific case studies but is in contrast with the principle  of \emph{separation of concerns}).  For instance, given the availability of an optimal \emph{wastewater interchange  scheme}, it is shown that in case the agents adopt the scheme and act accordingly, they can benefit both individually and collectively. In other words, the focus is shifted towards providing methods for optimizing the scheme in a specific case.

In a more recent work, Yazdanpanah and Yazan looked into the modeling and implementation of industrial symbiotic relations as two person cooperative games \cite{iesm2017}. Their focus is on allocation of the total operational cost among involved agents using a tailored version of the Shapley value and the standard notion of core. They show that for industrial symbiotic relation games, core is non-empty and hence such symbiotic practices are implementable in a stable manner. Moreover, as the Shapley value will be in the core, it is rational for industries to implement the Shapley allocation (with no need for interruption from the regulatory agent). Notice that although their industrial symbiosis implementation satisfies desired properties, e.g.,  autonomy and rationality, it is not expressive for implementing symbiotic relations among three or more industries. This is basically because their analysis is based on properties of two-person games.

Finally, Grimes-Casey et al. \cite{grimes2007game} focus on cooperative decision-making and heterogeneity of the involved agents (with respect to their epistemic states) in an industrial symbiosis scenario. They employ cost-based mechanisms to nudge the behavior of manufacturer as well as consumer agents towards using refillable beverage containers. Although their cooperative management framework is problem-specific, it is expressive and scalable as they employ profit values that are computable in low complexity. They also discuss that in real cases, the applicability of most cooperative game solution concepts depends  on government  enforcement. This is in-line with our attempt to capture the regulatory aspect of industrial symbiosis using incentive mechanisms.

\section{ISN Games}  \label{sec:ISN_games}
As discussed in \cite{albino2016exploring,iesm2017}, the total obtainable cost reduction---as the economic benefit---and its allocation among involved firms are key drivers behind the stability of $\ISN$s. For any set of industrial agents $S$, this total value can be computed based on the total \emph{traditional} cost, denoted by $T(S)$, and the total $\ISN$ \emph{operational} cost, denoted by $O(S)$. In brief, $T(S)$ is the summation of all the costs that firms have to pay in case the $\ISN$ does not occur (i.e., to discharge wastes and to purchase traditional primary inputs). On the other hand, $O(S)$ is the summation of costs that firms have to pay collectively in case the $\ISN$ is realized  (i.e.,  the costs for recycling and  treatment, for transporting resources among firms, and finally the transaction costs). Accordingly, for a non-empty finite set of industrial agents $S$ the obtainable symbiotic value $v(S)$ is equal to $T(S)-O(S)$. In this work, we assume a potential  $\ISN$, with a positive total obtainable value, and aim for tailoring game-theoretic value allocation and accordingly coordination mechanisms that guarantee a fair and stable implementation of the symbiosis. 

\subsection{ISNs as Cooperative Games}
Our $ijk$ $\ISN$ scenario can be modeled as a cooperative game in which $v(S)$ for any empty/singleton $S$ is $0$ and agent groups  $ij$, $ik$, $jk$, and $ijk$ have the values $4$, $5$, $4$, and $6$, respectively. Note that as the focus of $\ISN$s are on the benefit values obtainable due to potential cost reductions, all the empty and singleton agent groups have a zero value because cost reduction is meaningless in such cases. In the game theory language, the payoffs in $\ISN$ games are normalized. Moreover, the game is superadditive in nature.\footnote{Superadditivity implies that forming a symbiotic coalition of industrial agents either results in no value  or in a positive value. Implicitly, growth of a group can never result in decrease of the value.} So, given the traditional and operational cost values for all the possible agent groups $S$ (i.e., $T(S)$ and $O(S)$) in the non-empty finite set of industrial agents $N$, the $\ISN$ among agents in $N$ can be formally modeled as follows.

\begin{definition}[$\ISN$ Games] \label{def:ISR-Games} Let $N$ be a non-empty finite set of industrial agents. Moreover, for any agent group $S \subseteq N$, let $T(S)$ and $O(S)$ respectively denote the total traditional  and  operational costs for $S$. We say the $\ISN$ among industrial agents in $N$ is a normalized superadditive  cooperative game $(N,v)$ where $v(S)$ is:
\[
v(S)= 
\begin{cases}
    0,              & \text{if } |S| \leq 1 \\    
    T(S)- O(S),      & \text{otherwise}
\end{cases}
\]
\end{definition}

According to the following proposition, basic MC-Nets can be used to represent $\ISN$s. In further sections, this  representation aids combining $\ISN$ games with normative coordination rules. 

\begin{proposition}[$\ISN$s as MC-Nets] \label{prop:ISN-MCNET}
Any $\ISN$ can be represented as a basic MC-Net. 
\end{proposition}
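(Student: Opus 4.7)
The plan is to exploit the fact that every real-valued characteristic function admits a Möbius (equivalently, unanimity) decomposition on the Boolean lattice, and to realize this decomposition as a basic MC-Net whose rules contain only positive literals. Given an $\ISN$ game $(N,v)$ as in Definition \ref{def:ISR-Games}, the first step is to define the Möbius transform $m : 2^N \to \mathbb{R}$ by
\[
m(T) \;=\; \sum_{T' \subseteq T} (-1)^{|T|-|T'|}\, v(T'),
\]
and to recall the Möbius inversion identity $v(S) = \sum_{T \subseteq S} m(T)$ for every $S \subseteq N$. Because $\ISN$ games are normalized, one checks immediately that $m(\emptyset) = v(\emptyset) = 0$ and $m(\{i\}) = v(\{i\}) - v(\emptyset) = 0$ for every singleton, so only coalitions of size at least two can produce nonzero coefficients.

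Next, I would define the index set $\mathcal{T} = \{T \subseteq N : |T| \geq 2,\ m(T) \neq 0\}$ and construct the MC-Net $\{\rho_T : (\mathcal{P}_T, \mathcal{N}_T) \mapsto v_T\}_{T \in \mathcal{T}}$ by setting $\mathcal{P}_T = T$, $\mathcal{N}_T = \emptyset$, and $v_T = m(T)$. This meets the formal requirements of a basic MC-Net: $\mathcal{P}_T \subseteq N$, $\mathcal{N}_T \subset N$, $\mathcal{P}_T \cap \mathcal{N}_T = \emptyset$, and $v_T \in \mathbb{R} \setminus \{0\}$ by construction. Correctness then reduces to verifying that for each $S \subseteq N$ the set $\Pi(S)$ of applicable rule indices coincides with $\{T \in \mathcal{T} : T \subseteq S\}$: since $\mathcal{N}_T = \emptyset$ the second applicability condition is vacuous, and $\mathcal{P}_T \subseteq S$ is exactly $T \subseteq S$. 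Summing gives $\sum_{T \in \Pi(S)} v_T = \sum_{T \subseteq S,\, m(T) \neq 0} m(T) = \sum_{T \subseteq S} m(T) = v(S)$, as required.

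The main (and only mild) obstacle is justifying the Möbius inversion identity, which is a standard lattice-theoretic fact and can be cited rather than reproved; the nonzero-coefficient constraint demanded by the MC-Net definition is handled transparently by restricting to $\mathcal{T}$. It is worth noting that neither superadditivity nor positivity of $v(N)$ is used anywhere in the argument, so the construction applies uniformly to every game of the form in Definition \ref{def:ISR-Games}, establishing the claim.
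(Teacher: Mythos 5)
Your proof is correct, but it takes a genuinely different route from the paper's. You invoke the M\"obius (unanimity) decomposition $v(S)=\sum_{T\subseteq S}m(T)$ and turn each nonzero coefficient $m(T)$ with $|T|\geq 2$ into a positive-literal-only rule $(T,\emptyset)\mapsto m(T)$; normalization of $\ISN$ games kills the empty and singleton coefficients, and the verification that applicability reduces to $T\subseteq S$ is exactly right. The paper instead writes one rule per coalition $S_i$ with $|S_i|\geq 2$, of the form $(S_i, N\setminus S_i)\mapsto T(S_i)-O(S_i)$; because $\mathcal{P}_i\cup\mathcal{N}_i=N$, each such rule fires for exactly one coalition, so the MC-Net is a literal transcription of the characteristic function with no inversion formula needed. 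What each approach buys: yours is the canonical expressiveness argument for MC-Nets (essentially the one in Ieong and Shoham), yields rules with no negative literals, and can be much more compact when the game has sparse interaction structure; the paper's is more elementary, avoids citing M\"obius inversion, and---as the remark following the proposition emphasizes---is deliberately phrased in terms of the traditional and operational costs $T(S_i)$ and $O(S_i)$ so that the rules carry direct industrial-symbiosis meaning rather than abstract interaction coefficients. One small caveat on your construction: the rule values must be nonzero by the MC-Net definition, which you handle by restricting to $m(T)\neq 0$; note the paper's construction technically has the same issue if some $v(S_i)=0$ and would need the analogous restriction, so your version is, if anything, slightly more careful on that point.
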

\begin{proof} 
We provide a constructive proof by (1) introducing an algorithm for specification of all the required MC-Net rules and (2) showing that the constructed MC-Net is equal to the original $\ISN$ game. {\it(1) -} Let $(N,v)$ be an arbitrary $\ISN$ game among industrial agents in $N$. Moreover, let $S_{\geq 2}= \{S \subseteq N : |S|\geq 2 \}$ be the set of all agent groups with two or more members and let $K=|S_{\geq 2}|$ denote its cardinality. We start with an empty set of rules. Then for all agent groups $S_{i} \in S_{\geq 2}$, for $i= 1,\dots,K$, we add a rule $\{\rho_i : (S_i,N \setminus S_i) \mapsto v_i=T(S_i)-O(S_i)  \}$. {\it(2) -} As in all the constructed rules $\rho_i$  it holds that $\mathcal{P}_i \cap \mathcal{N}_i=\emptyset$ and $\mathcal{P}_i \cup \mathcal{N}_i = N$, we have that $\sum_{i\in \Pi(S)} v_i$ is equal to $v(S)$ for all the members of $S_{\geq 2}$. Moreover, $\Pi(S)$ for empty and singleton agent groups would be empty, hence reflects the $0$ value for such groups in the original game.     \end{proof}

Note that the proof does not simply rely on the representation power and expressivity of MC-Nets (as shown in \cite{ieong2005marginal}) but provides a constructive method that respects the context of industrial symbiosis and related cost values to generate all the required rules for representing $\ISN$s as MC-Nets.  
\begin{example}[$\ISN$ Scenario] \label{ex:mcnet01} 
Our running example can be   represented by the  basic MC-Net\footnote{For notational simplicity,  we avoid brackets around agent groups, e.g., we write $ij$ instead of $\{i,j\}$.}  $\{\rho_1 : (ij, k) \mapsto 4,  \rho_2 : (ik, j) \mapsto 5, \rho_3 : (jk, i) \mapsto 4, \rho_4 : (ijk, \emptyset) \mapsto 6 \}$. 
\end{example}

\subsection{Benefit Allocation Mechanisms and ISN Games}  
As discussed earlier, how firms share the obtainable $\ISN$ benefits plays a key role in the process of $\ISN$ implementation, mainly due to stability and fairness concerns. Roughly speaking, industrial firms are economically rational firms that defect non-beneficial relations (instability) and mostly tend to reject $\ISN$ proposals in which benefits are not shared with respect to their contribution (unfairness). In this work, we focus on Core- and Shapley-allocation mechanisms as two standard methods that characterize stability and fairness in cooperative games, receptively. We show that these solution concepts are applicable in a specific class of $\ISN$s but are not generally scalable for value allocation in the implementation phase of $\ISN$s. This motivates introducing incentive mechanisms to guarantee the implementability of ``desired'' $\ISN$s.  

\subsubsection{Two-Person Industrial Symbiosis Games}
When the game is between two industrial firms (i.e., a bilateral relation between a resource receiver/provider couple), it has additional properties that result in applicability of both Core and Shapley allocations. We denote the class of such $\ISN$ games by $\ISN_{\Lambda}$. This is, $\ISN_{\Lambda}= \set{ (N,v) : (N,v)  \text{ is an $\ISN$ game and }   |N|=2}$. Moreover, the $\ISN$ games in which three or more agents are involved will form $\ISN_\Delta$. The class of $\ISN_\Lambda$ games corresponds to the so called ISR games in \cite{iesm2017}. The difference is on the value allocation perspective as in \cite{iesm2017}, they assume the elimination of traditional costs (thanks to implementation of the symbiotic relation) and focus  on the allocation of operational costs; while we focus on the allocation of the total benefit, obtainable due to potential cost reductions.

\begin{lemma}[$\ISN_\Lambda$ Balancedness] \label{lemma01}
Let $(N,v)$ be an arbitrary $\ISN_\Lambda$ game. It always holds that $(N,v)$ is balanced. 
\end{lemma}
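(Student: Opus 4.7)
The plan is to unpack the definition of balancedness for the two-agent case and show it reduces to a single easily verified inequality, exploiting normalization of the $\ISN$ game. Write $N=\{i,j\}$, so the only non-empty coalitions are $\{i\}$, $\{j\}$, and $\{i,j\}$. A weight vector $(\lambda_S)_{S\subseteq N}$ is balanced iff
\[
\lambda_{\{i\}}+\lambda_{\{i,j\}}=1 \quad \text{and} \quad \lambda_{\{j\}}+\lambda_{\{i,j\}}=1,
\]
so setting $t:=\lambda_{\{i,j\}}\in[0,1]$ forces $\lambda_{\{i\}}=\lambda_{\{j\}}=1-t$. Every balanced vector is therefore of this one-parameter form.

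Next I would substitute this into $\sum_{S\subseteq N} \lambda_S v(S)$ and invoke normalization from Definition~\ref{def:ISR-Games}, which gives $v(\{i\})=v(\{j\})=0$ since these are singletons. The sum collapses to $t\cdot v(N)$, so the balancedness inequality becomes $t\cdot v(N)\le v(N)$, i.e.\ $(1-t)\cdot v(N)\ge 0$.

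Since $t\in[0,1]$, it only remains to show $v(N)\ge 0$. This follows immediately from superadditivity: $v(N)\ge v(\{i\})+v(\{j\})=0$ (and is also guaranteed by the standing assumption that the $\ISN$ has a positive total obtainable value). This completes the verification for every balanced vector.

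I do not expect any real obstacle; the statement is essentially structural. The only minor subtlety to flag in writing is to be explicit that the parametrisation above exhausts all balanced vectors (so no edge case is overlooked) and that the non-negativity of $v(N)$ is justified from the definition of $\ISN$ games rather than assumed silently. An alternative, equally short route would be to invoke the Bondareva--Shapley theorem and exhibit the core allocation $x_i=x_j=v(N)/2$, which is individually rational (since singletons have value $0$) and efficient; but the direct verification via the definition is cleaner at this stage of the paper and avoids forward reference to core-construction arguments.
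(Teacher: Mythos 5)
Your proof is correct, but it takes a genuinely different route from the paper's. You verify balancedness directly from the Bondareva--Shapley characterization: you parametrize all balanced weight collections on a two-agent set by a single parameter $t=\lambda_{\{i,j\}}$ (forcing $\lambda_{\{i\}}=\lambda_{\{j\}}=1-t$), use normalization to collapse $\sum_S \lambda_S v(S)$ to $t\cdot v(N)$, and reduce the claim to $v(N)\ge 0$, which follows from superadditivity. The paper instead proves the stronger property that every $\ISN_\Lambda$ game is \emph{supermodular} (convex), checking the inequality $v(S)+v(T)\le v(S\cup T)+v(S\cap T)$ over the handful of coalition pairs available with two agents, and then infers balancedness from convexity. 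Your argument is more elementary and self-contained (it needs nothing beyond the definition of a balanced vector), and it is arguably the cleaner verification for $|N|=2$. What the paper's route buys is the convexity itself: the subsequent theorem on fair and stable $\ISN_\Lambda$ games relies on the Shapley value lying in the core, which is a property of convex games rather than of balanced games in general, so the supermodularity established in the paper's proof is doing extra work downstream that your version would not supply. One cosmetic point: the weight $\lambda_\emptyset$ is unconstrained by the balancedness conditions, so strictly your parametrization has a second free parameter; since $v(\emptyset)=0$ it contributes nothing, but it is worth a half-sentence to note this when claiming the one-parameter form exhausts all cases.
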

\begin{proof}
We show that any $\ISN_\Lambda$ game is supermodular which directly implies balancedness. A game $(N,v)$ is supermodular iff for any couple of arbitrary agent groups $S,T \subseteq N$, we have $v(S)+v(T) \leq v(S\cup T) + v(S\cap T)$. In $\ISN_\Lambda$ games, by checking the validity of this inequality for all the six possible $S,T$ combinations, the claim will be proved. For $S=\emptyset$, we have the following valid inequality $v(\emptyset)+v(T) \leq v(\emptyset \cup T = T) + v(\emptyset)$. For $S=N$, the inequality can be reformulated in the following valid form $v(N)+v(T) \leq v(N\cup T=N) + v(N\cap T=T)$. Finally, when $S$  and $T$ are equal to the only  possible  (disjoint) singleton  groups, we have $v(S)+v(T) \leq v(N) + v(\emptyset)$ which holds thanks to the superadditivity of $\ISN$ games.    
\end{proof}

Relying on Lemma \ref{lemma01}, we have the following result that focuses on the class of  $\ISN_\Lambda$ relations and shows the applicability of two standard game-theoretic solution concepts for implementing fair and stable industrial symbiotic networks. 

\begin{theorem} [Fair and Stable $\ISN_\Lambda$ Games]
Let $(N,v)$ be an arbitrary $\ISN_\Lambda$ game. The symbiotic relation among industrial agents in $N$ is implementable in a unique stable and fair manner.
\end{theorem}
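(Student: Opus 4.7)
The plan is to leverage Lemma \ref{lemma01} together with the standard axiomatic characterization of the Shapley value, and then verify that the (unique) Shapley allocation of a two-person $\ISN$ game happens to coincide with a Core allocation, giving a single payoff vector that is simultaneously fair and stable.

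First I would invoke Lemma \ref{lemma01}: every $\ISN_\Lambda$ game is balanced, so by the Bondareva--Shapley theorem the Core is non-empty, which establishes that at least one stable allocation exists. Next, I would recall that the Shapley allocation is uniquely determined by the axioms EFF, SYM, DUM, and ADD, so exactly one \emph{fair} allocation $x = (\Phi_1(v),\Phi_2(v))$ exists for the game.

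The core of the argument is to compute this Shapley allocation explicitly in the two-person case and check it satisfies RAT and EFF. Since $N = \{i,j\}$ with $v(\emptyset) = v(\{i\}) = v(\{j\}) = 0$ (by Definition \ref{def:ISR-Games}, as $|S|\leq 1$), symmetry between the two singletons together with the definition of $\Phi$ yields $\Phi_i(v) = \Phi_j(v) = v(N)/2$. Because the underlying $\ISN$ is assumed to have positive total obtainable value, $v(N) \geq 0$, so $\Phi_i(v) \geq 0 = v(\{i\})$ for each agent, giving individual rationality; coalitional rationality for $S = N$ follows immediately from efficiency $\Phi_i(v) + \Phi_j(v) = v(N)$. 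Hence the Shapley allocation lies in the Core, and we have exhibited a single payoff vector that is both fair (by axiomatic uniqueness) and stable (by Core membership).

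For uniqueness I would argue as follows: any \emph{fair} allocation must equal the Shapley allocation by the axiomatic characterization, so no other candidate can satisfy both criteria jointly, even though the Core itself may contain additional (non-fair) allocations. The main subtlety--and the step I would write most carefully--is precisely this interpretation of ``unique'': the Core of an $\ISN_\Lambda$ game is generally a line segment and therefore contains many stable allocations, but only one of them is also fair. Framing the claim so that ``unique'' refers to the intersection of the fair and stable solution sets (rather than to the Core alone) is the only delicate point; everything else is a direct computation from the definitions combined with Lemma \ref{lemma01}.
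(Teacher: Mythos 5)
Your proposal is correct, and it reaches the conclusion by a genuinely different route for the key step. The paper also begins with Lemma \ref{lemma01} to get core non-emptiness, but for the crucial claim that the \emph{fair} allocation is also \emph{stable} it appeals to Shapley's theorem on convex games (\cite[Theorem 7]{shapley1971cores}): since Lemma \ref{lemma01} actually establishes supermodularity (not just balancedness), the Shapley value lies in the core. You instead compute the Shapley value explicitly for the two-person case, obtaining $\Phi_i(v)=\Phi_j(v)=v(N)/2$ from the zero singleton values, and verify RAT and EFF by hand; note that $v(N)\geq 0$ already follows from superadditivity of the game (Definition \ref{def:ISR-Games}) even without the separate ``positive obtainable value'' assumption, and individual rationality then holds. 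Your argument is more elementary and self-contained---it does not need the convexity machinery, and in fact barely needs Lemma \ref{lemma01}, since exhibiting an explicit core element already proves non-emptiness---whereas the paper's argument has the advantage of being the one that would generalize beyond two players if the supermodularity argument extended (it does not, as Theorem \ref{th:unimp} shows). One caution: the paper's phrasing ``in balanced games, the Shapley allocation is a member of the core'' is not true for balanced games in general, only for convex ones; your explicit computation sidesteps that subtlety entirely. Finally, you treat the ``unique'' qualifier more carefully than the paper does, correctly observing that the core is a line segment and that uniqueness can only refer to the intersection of the fair and stable solution sets, which is the singleton containing the Shapley allocation.
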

\begin{proof} 
\emph{Stability}: As discussed earlier, core allocations guarantee the stability conditions (i.e., RAT and EFF). However, the core is only an applicable solution concept for \emph{balanced} games. According to Lemma \ref{lemma01}, we have that $\ISN_\Lambda$ games are balanced. Hence, the core of any arbitrary $\ISN_\Lambda$ game is nonempty and any allocation in the core guarantees the stability. \emph{Stability and Fairness}: As presented earlier, the Shapley allocation guarantees the fairness conditions (i.e., EFF, SYM, DUM, ADD). However, it does not always satisfy the rationality (RAT) condition (which is necessary for stability). According to Lemma \ref{lemma01}, we have that $\ISN_\Lambda$ games are balanced. Moreover, according to \cite[Theorem 7]{shapley1971cores}, in balanced games, the Shapley allocation is a member of the core and hence satisfies the rationality condition. Accordingly, for any $\ISN_\Lambda$ game, the Shapley allocation guarantees both the stability and fairness.     
\end{proof}

\subsubsection{ISN Games}
In this section we focus on $\ISN_\Delta$ games as the class of $\ISN$ games with three or more participants and discuss the applicability of the two above mentioned allocation mechanisms for implementing such industrial games. 

\begin{example} [Neither Core Nor Shapley] \label{ex:02} 
Recall the $ijk$ $\ISN_\Delta$  scenario from Section \ref{sec:analysis}. To have a stable allocation $(x_i,x_j,x_k)$ in the core, the EFF condition implies $x_i+x_j+x_k=6$ while the RAT condition implies $x_i+x_j\geq 4     \wedge  x_i+x_k \geq 5        \wedge  x_j+x_k \geq 4$. As these conditions cannot be satisfied  simultaneously, we can conclude that the core is empty and there exists no way to implement this $\ISN$ in a stable manner. Moreover, although the Shapley allocation provides a fair allocation $(13/6, 10/6, 13/6)$, it is not rational for firms to implement the $\ISN$. E.g., $i$ and $k$ obtain $30/6$ in case they defect while according to the Shapley allocation, they ought to sacrifice as they collectively  have $26/6$.
\end{example}

As illustrated in this example, the Core of $\ISN_\Delta$ games may be empty which implies the inapplicability of this solution concept as a general method for implementing $\ISN$s. We now generalize the exemplified idea to the following nonexistence theorem about implementability of  $\ISN_\Delta$ games in a fair and stable manner. 

\begin{theorem}[Unimplementability of $\ISN_\Delta$ Games] \label{th:unimp}
Let $(N,v)$ be an arbitrary $\ISN_\Delta$ game. The symbiotic relation among industrial agents in $N$ is not generally implementable in a stable manner. 
\end{theorem}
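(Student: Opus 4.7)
The theorem is a negation of a universal claim, so the plan is to establish it by a single counterexample: an $\ISN_\Delta$ instance whose core is empty. Fortunately, Example \ref{ex:02} in the preceding text already supplies a natural candidate, namely the $ijk$ scenario with pairwise values $4, 5, 4$ and grand-coalition value $6$. I would therefore structure the proof as (i) fix this (or any structurally equivalent) three-agent $\ISN$ game, (ii) show the core is empty by a linear-feasibility argument, and (iii) remark that exhibiting one such instance refutes the claim that every $\ISN_\Delta$ game is stably implementable.

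The concrete step I would carry out is the standard summing trick for three-agent core infeasibility. Any core allocation $(x_i,x_j,x_k)$ must satisfy the rationality inequalities $x_i+x_j\geq v(ij)$, $x_i+x_k\geq v(ik)$, and $x_j+x_k\geq v(jk)$, together with the efficiency equation $x_i+x_j+x_k=v(ijk)$. Summing the three pairwise inequalities gives $2(x_i+x_j+x_k)\geq v(ij)+v(ik)+v(jk)$, so any nonempty core requires $v(ij)+v(ik)+v(jk)\leq 2\,v(ijk)$. For the example's values this reads $13\leq 12$, which fails. Hence RAT and EFF are simultaneously infeasible, and the core is empty.

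To close the argument, I would invoke the Bondereva--Shapley theorem from the preliminaries, noting that the infeasibility above is exactly a failure of balancedness (the balanced vector giving weight $1/2$ to each two-element coalition witnesses $\sum_S \lambda_S v(S)=13/2 > 6=v(N)$). Since the game is unbalanced, no stable allocation exists, and because this game is a legitimate member of $\ISN_\Delta$, the general implementability claim fails.

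The main obstacle is mostly rhetorical rather than mathematical: one must be careful that the chosen cost data $T(S), O(S)$ indeed realize the values $v(ij)=4$, $v(ik)=5$, $v(jk)=4$, $v(ijk)=6$ under Definition \ref{def:ISR-Games} while keeping the game superadditive and normalized, so that the witness is an honest $\ISN_\Delta$ instance and not merely an abstract characteristic function. Once that is checked, the rest is a one-line linear inequality. If one wanted a stronger statement (e.g., that the unimplementable instances are dense, or that the phenomenon persists for every $|N|\geq 3$), the same pairwise-summation obstruction extends by padding with dummy agents, but for the theorem as stated a single witness suffices.
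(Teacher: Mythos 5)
Your proposal is correct and follows essentially the same route as the paper: both arguments rest on exhibiting the running $ijk$ example as an unbalanced $\ISN_\Delta$ game whose core is empty, so that no stable allocation exists. Your version is in fact more explicit than the paper's (the pairwise-summation infeasibility, the balanced vector with weights $1/2$, and the caveat about realizing the values via genuine $T(S)$ and $O(S)$ data), but the underlying idea is identical.
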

\begin{proof}
Although all $\ISN_\Delta$ games are superadditive and hence result in a positive obtainable benefit, they may be unbalanced (as illustrated in the running example). Accordingly, for any unbalanced  $\ISN_\Delta$ game, the Core is empty. In such cases, the symbiotic relation is not implementable in a stable manner.    \end{proof}

Note  that the fair implementation of $\ISN_\Delta$ games is not always in compliance with the rationality condition.  This theorem---in accordance  with the intuition presented in example \ref{ex:02}---shows that we lack general methods that guarantee stability and fairness of $\ISN$ implementations. So, even if an industrial symbiotic practice could result in \emph{collective} economic and environmental benefits, it may not last due to instable or unfair implementations. One natural response which is in-line with realistic  $\ISN$ practices is to employ monetary incentives as a means of coordination. 

\section{Coordinated ISN} \label{sec:5}
In realistic $\ISN$s, the symbiotic practice takes place in the presence of economic, social, and environmental \emph{policies} and under \emph{regulations} that aim to enforce the policies by nudging the behavior of agents towards desired ones. In other words, while the policies generally indicate whether an $\ISN$ is ``good (bad, or neutral)", the regulations are a set of norms that---in case of agents' compliance---result in a spectrum of  acceptable (collective) behaviors. Note that the acceptability, i.e., goodness, is evaluated and ought to be verified from the point of view of the policy-makers as community representatives. In this section, we follow this normative approach and aim for using normative coordination to guarantee the implementability of desirable $\ISN$s in a stable and fair manner\footnote{In the following, we simply say \emph{implementability} of $\ISN$s instead of \emph{implementability in a fair and stable} manner.}.

\subsection{Normative Coordination of ISNs}
Following \cite{shoham1995social,grossi2013norms}, we see that during the process of $\ISN$ implementation as a game, norms can be employed as game transformations, i.e., as ``ways of transforming existing games in order to bring about outcomes that are more desirable from a welfaristic point of view"\cite{grossi2013norms}. For this account,  given the economic, environmental, and social dimensions and with respect to potential socio-economic consequences, industrial symbiotic networks can be partitioned in three classes, namely \emph{promoted}, \emph{permitted}, and \emph{prohibited} $\ISN$s. Such a classification  can be modeled by a normative socio-economic policy function $\wp : 2^N \mapsto \set{p^+,p^\circ , p^-}$, where $N$ is the finite set of industrial firms. Moreover,  $p^+$, $p^\circ$, and  $p^-$ are labels---assigned by a third-party authority---indicating that the $\ISN$ among any given agent group is either promoted, permitted, or prohibited, respectively. The three sets $P_\wp^+$, $P_\wp^\circ$, and $P_\wp^-$ consist of all the $\wp$-promoted, -permitted, and -prohibited agent groups, respectively. Formally $P_\wp^+=\set{S \subseteq N : \wp(S)=p^+}$ ($P_\wp^\circ$ and $P_\wp^-$ can be formulated analogously).  Note that $\wp$ is independent of the $\ISN$ game among agents in $S$, its economic figures, and corresponding cost values---in general, it is independent of the  value function of the game. E.g., a symbiotic relation  may be labeled with $p^-$ by policy $\wp$---as it is focused on exchanging a  hazardous waste---even if it results in a high level of obtainable benefit. 

\begin{example} [Normative $\ISN$s] \label{ex-policy} In our $ijk$ $\ISN$ scenario, imagine a policy $\wp_1$ that assigns $p^-$ to all the singleton and two-member groups (e.g., because they discharge hazardous wastes in case they operate in one- or two-member groups) and  $p^+$ to the grand coalition (e.g., because in that case they have zero waste discharge). So, according to $\wp_1$, the $\ISN$ among all the three agents is ``desirable" while other possible coalitions lead to ``undesirable'' $\ISN$s.  
\end{example}

As illustrated in Example \ref{ex-policy}, any socio-economic policy function merely indicates the desirability of a potential  $\ISN$ among a given group of agents and is silent with respect to methods for \emph{enforcing} the implementability of promoted or unimplementability of  prohibited $\ISN$s.  Note that $\ISN_\Lambda$ games are always implementable. So, $\ISN$s'  implementability refers to the general class of $\ISN$ games including $\ISN_\Delta$ games.

The rationale behind introducing socio-economic policies for $\ISN$s is mainly to make sure that the set of promoted $\ISN$s are implementable in a fair and stable manner while prohibited ones are instable. To ensure this, in real $\ISN$ practices, the regulatory agent (i.e., the regional or national government) introduces regulations---to support the policy---in the form of monetary incentives\footnote{See \cite{DBLP:conf/ijcai/MeirRM11,DBLP:conf/atal/ZickPJ13} for similar approaches on incentivizing cooperative agent systems.}. This is to ascribe subsidies to promoted and taxes to prohibited collaborations (see \cite{kakhbod2013resource} for an implementation theory approach on mechanisms that employ monetary incentives to achieve desirable resource allocations). We follow this practice and employ a set of rules to ensure/avoid the implementability of desired/undesired $\ISN$s among industrial agents in $N$ via allocating incentives. Such a set of incentive rules can be represented by an MC-Net $\Re = \{\rho_i : (\mathcal{P}_i,\mathcal{N}_i) \mapsto \iota_i  \}_{i \in K}$ in which $K$ is the set of rule  indices. Let $\Im(S)$ denote the set of rule indices that are applicable to $S\subseteq N$. Then, the incentive value for $S$, denoted by $\iota(S)$, is defined as $\sum_{i \in \Im(S)} \iota_i$.  This is, a set of incentive rules can be represented also as a cooperative game $\Re=(N,\iota)$ among agents in $N$.  The following proposition shows that for any $\ISN$ game there exists a set of incentive rules to guarantee the implementability of the $\ISN$ in question.

\begin{proposition} [Implementability Ensuring Rules]  \label{prop:imp_ensurer}
Let $G$ be an arbitrary $\ISN$ game among industrial agents in $N$. There exists a set of incentive rules to guarantee the implementability of $G$. 
\end{proposition}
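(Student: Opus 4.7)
The plan is constructive, in the spirit of the proof of Proposition~\ref{prop:ISN-MCNET}: for an arbitrary $\ISN$ game $G=(N,v)$ I will produce an explicit MC-Net $\Re$ of incentive rules such that the transformed game $(N,v+\iota)$ admits an allocation that is simultaneously fair (satisfies EFF, SYM, DUM, ADD) and stable (satisfies EFF and RAT). The simplest choice is the one-rule net $\Re=\{\rho^\star : (N,\emptyset)\mapsto M\}$ that rewards only the grand coalition by a scalar $M$ calibrated from the values of $v$.

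First I would compute the Shapley allocation of the combined game. By additivity (ADD), $\Phi_i(v+\iota)=\Phi_i(v)+\Phi_i(\iota)$ for every $i\in N$. Since $\iota(S)=0$ for all $S\subsetneq N$ and $\iota(N)=M$, the players are symmetric in $(N,\iota)$, so SYM and EFF together force $\Phi_i(\iota)=M/n$. Hence the Shapley allocation of the transformed game is $x_i=\Phi_i(v)+M/n$, which is fair by the Shapley axioms regardless of the numerical value of $M$.

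Second, I would pick $M$ so that $x$ also lies in the core of $v+\iota$. The EFF constraint on $N$ holds automatically since the Shapley value is efficient. For each non-empty proper coalition $S\subsetneq N$, the RAT constraint $\sum_{i\in S}x_i\geq (v+\iota)(S)=v(S)$ reduces to the scalar inequality $M\geq \tfrac{n}{|S|}\bigl(v(S)-\sum_{i\in S}\Phi_i(v)\bigr)$. Since $N$ is finite there are only finitely many such inequalities, so the choice $M^\star=\max\bigl\{0,\,\max_{\emptyset\neq S\subsetneq N}\tfrac{n}{|S|}\bigl(v(S)-\sum_{i\in S}\Phi_i(v)\bigr)\bigr\}$ validates all of them simultaneously. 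If $M^\star=0$ I simply take $\Re=\emptyset$ (the original Shapley allocation is already in the core); otherwise $\Re=\{\rho^\star:(N,\emptyset)\mapsto M^\star\}$ is a well-formed basic MC-Net, with $\mathcal{P}\cap\mathcal{N}=\emptyset$ and non-zero value, and the allocation $x$ is both fair and stable in $(N,v+\iota)$.

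The main obstacle is not mathematical but representational: I must check that $\rho^\star$ is an admissible basic MC-Net rule in the sense of Section~\ref{sec:prelim} and that, once combined with the MC-Net of Proposition~\ref{prop:ISN-MCNET}, the resulting rule set still represents a well-defined cooperative game on $N$. Both checks follow directly from the definitions. I would flag that this single-rule construction is deliberately crude: it secures implementability but is generally wasteful in terms of total subsidy, and it is policy-blind (every coalition containing $N$ as a superset receives the same bonus). A variant using one rule per violating coalition, or rules indexed by the promoted/permitted/prohibited labels $p^+,p^\circ,p^-$, would be the natural refinement for the balanced-budget and policy-supporting results announced in the introduction.
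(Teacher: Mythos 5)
Your construction is correct, but it takes a genuinely different route from the paper. The paper's proof runs Algorithm~\ref{algor} over the MC-Net representation of $G$ from Proposition~\ref{prop:ISN-MCNET}: every rule not applicable to the grand coalition is mirrored by a tax rule with value $-v_i$, so in the coordinated game every proper coalition of size $\geq 2$ is flattened to value $0$ while $N$ keeps $v(N)\geq 0$; stability (and, implicitly, Shapley-in-core) is then immediate. You instead add a single \emph{subsidy} rule $(N,\emptyset)\mapsto M^\star$ and calibrate $M^\star$ against the finitely many core inequalities evaluated at the shifted Shapley allocation $\Phi_i(v)+M^\star/n$; your algebra checks out (the rule is only applicable to $S=N$, symmetry of $(N,\iota)$ gives $\Phi_i(\iota)=M/n$, and the reduction of RAT to $M\geq \tfrac{n}{|S|}(v(S)-\sum_{i\in S}\Phi_i(v))$ is right), and your handling of the degenerate case $M^\star=0$ correctly respects the $v_i\neq 0$ requirement on MC-Net rules. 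What each approach buys: your version is representationally leaner (one rule rather than one per coalition) and is more explicit than the paper about \emph{fairness}, since it exhibits a concrete allocation that is simultaneously the Shapley value of the coordinated game and a core element. The paper's tax-based version, however, levies money only off the equilibrium path --- if the agents comply and form $N$, no incentive is actually paid --- which is precisely what the later budget-balancedness machinery (Theorem~\ref{theorem_CISN_impl} and Algorithm~\ref{algor2}) relies on; your subsidy $M^\star$ must be disbursed on the equilibrium path and can be large, a cost you correctly flag yourself. So your proof establishes the proposition as stated, but would not substitute for the paper's construction in the downstream results.
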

\begin{proof} 
Recall that according to Proposition \ref{prop:ISN-MCNET}, $G$ can be represented as an MC-Net. To prove the claim, we provide Algorithm \ref{algor} that takes the MC-Net representation of $G$ as the input and generates a set of rules that guarantee the implementability of $G$.

\begin{algorithm}[ht]
\begin{algorithmic}[1]
\STATE \textbf{Data:} $\ISN$ game $G= \{\rho_i : (\mathcal{P}_i,\mathcal{N}_i) \mapsto v_i  \}_{i \in K}$ among agents in $N$; $K$ the set of rule indices for $G$.
\STATE \textbf{Result:} Incentive rule set $\Re$ for $G$.
\STATE $n \gets length(K) $ and $\Re=\set{}$

\FOR{$i\gets1$ \TO $n$ }
     \IF{$i \in \Pi(N)$}
        \STATE {$\Re \gets \Re \cup \set{\rho_i : (\mathcal{P}_i,\mathcal{N}_i) \mapsto 0}$}
    \ELSE
        \STATE{$\Re \gets  \Re \cup \set{\rho_i : (\mathcal{P}_i,\mathcal{N}_i) \mapsto -v_i}$} 
    \ENDIF
\ENDFOR
\end{algorithmic}
\caption{Generating rule set $\Re$ for $\ISN$ game $G$.}
\label{algor}
\end{algorithm}

By allocating $-v_i$ to rules that are not applicable to $N$, any coalition other than the grand coalition will be faced with a tax value. As the original game is superadditive, the agents will have a rational incentive to cooperate in $N$ and the $\ISN$ is implementable in a stable manner thanks to the provided incentive rules.    
\end{proof} 

Till now, we introduced socio-economic policies and regulations as required (but not yet integrated) elements for modeling coordinated $\ISN$s. In the following section, we combine the idea behind incentive regulations and normative socio-economic policies to introduce the concept of \emph{Coordinated $\ISN$s} ($\CISN$s) as a multiagent system for implementing industrial symbiosis.  

\subsection{Coordinated ISNs}
As discussed above, $\ISN$ games can be combined with a set of regulatory rules that allocate incentives to agent groups (in the form of subsidies and taxes). We call this class of games, $\ISN$s in presence of coordination mechanisms, or \emph{Coordinated $\ISN$s} ($\CISN$s) in brief.

\begin{definition} [Coordinated $\ISN$ Games ($\CISN$)] Let $G$ be an $\ISN$  and $\Re$ be a  set of regulatory incentive rules, both as MC-Nets among industrial agents in $N$. Moreover, for each agent group $S\subseteq N$, let $v(S)$ and $\iota(S)$ denote the value of $S$ in $G$ and the incentive value of $S$ in $\Re$, respectively. We say the Coordinated $\ISN$ Game ($\CISN$) among industrial agents in $N$ is a cooperative game $(N,c)$ where for each agent group $S$, we have that $c(S)=v(S) + \iota(S)$. 
\end{definition}

Note that as both the $\ISN$ game $G$ and the set of regulatory incentive rules $\Re$ are MC-Nets among industrial agents in $N$, then for each agent group $S \subseteq N$ we have that $c(S)$ is equal to the summation of all the applicable rules to $S$ in both $G$ and $\Re$. Formally, $c(S)=\sum_{i \in \Pi(S)} v_i + \sum_{j \in \Im(S)} \iota_j$ where $\Pi(S)$ and $\Im(S)$ denote the set of applicable rules to $S$ in $G$ and $\Re$, respectively. Moreover, $v_i$ and $\iota_j$ denote the value of applicable rules $i$ and $j$ in $\Pi(S)$ and $\Im(S)$, respectively. We sometime use $G+\Re$ to denote the game $C$ as the result of incentivizing $G$ with $\Re$. The next proposition shows the role of regulatory rules in  the enforcement of socio-economic policies.

\begin{proposition} [Policy Enforcing Rules]
For any promoted $\ISN$ game $G$ under policy $\wp$, there exist an implementable  $\CISN$ game $C$.  
\end{proposition}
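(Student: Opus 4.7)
The plan is to derive this proposition essentially as a corollary of Proposition \ref{prop:imp_ensurer}, by observing that the constructive machinery already developed for guaranteeing implementability of an arbitrary \ISN{} game is precisely what a policy-maker needs when the grand coalition is the promoted outcome. Since $G$ is promoted under $\wp$, the coalition of interest to the policy-maker is (at least) $N$, and the task reduces to exhibiting a regulation $\Re$ such that $C = G + \Re$ is implementable in the fair and stable sense.

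First, I would invoke Proposition \ref{prop:ISN-MCNET} to represent the promoted \ISN{} game $G$ as a basic MC-Net $\{\rho_i : (\mathcal{P}_i,\mathcal{N}_i) \mapsto v_i\}_{i \in K}$. Next, I would feed this MC-Net directly into Algorithm \ref{algor}, obtaining the regulation set $\Re$ that assigns $-v_i$ to every rule not applicable to $N$ and $0$ to every rule applicable to $N$. Defining $C := G + \Re$ as the corresponding \CISN{} game, I then have to check that for every $S \subseteq N$, $c(S) = \sum_{i \in \Pi(S)} v_i + \sum_{j \in \Im(S)} \iota_j$ behaves in the intended way: for $S = N$ all rules contribute their original values (the incentive component is zero), while for any proper coalition $S \subsetneq N$, each applicable rule contributes $v_i - v_i = 0$, so $c(S) = 0$.

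Having established these payoffs, the stability argument mirrors the closing sentence in the proof of Proposition \ref{prop:imp_ensurer}: because $G$ is superadditive by Definition \ref{def:ISR-Games}, we have $v(N) \geq 0$; consequently, allocating $c(N) = v(N)$ across the agents of $N$ in any non-negative way satisfies EFF, and RAT is automatic since every proper coalition has value $0$. Fairness then follows by applying the Shapley allocation to $C$, which on this transformed game coincides with a rational distribution of $c(N)$ because all marginal contributions outside $N$ are zero, so the Shapley payoff vector lies in the (now non-empty) core. This simultaneously yields the fair and stable implementation required.

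The most delicate point I anticipate is making sure the reduction to Proposition \ref{prop:imp_ensurer} is clean — in particular, that the reader sees why the mere fact that $G$ is \emph{promoted} under $\wp$ is enough to invoke the construction (rather than requiring a policy-dependent $\Re$). The key observation is that the promoted label identifies $N$ as the target of implementability, which is exactly the coalition whose formation Algorithm \ref{algor} is designed to enforce; the policy function $\wp$ otherwise plays no role in the proof beyond fixing this target. A brief remark to this effect, together with pointing out that permitted and prohibited coalitions are neutralised to value $0$ by the construction, should suffice to close the argument.
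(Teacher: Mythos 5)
Your proposal is correct and takes essentially the same route as the paper: the paper's proof is a one-line reduction to Proposition~\ref{prop:imp_ensurer}, invoking Algorithm~\ref{algor} to produce the required $\Re$, exactly as you do. Your additional explicit verification (via the canonical MC-Net of Proposition~\ref{prop:ISN-MCNET}, under which every proper coalition of the coordinated game gets value $0$ and the grand coalition keeps $v(N)$, so the core is manifestly non-empty and contains the Shapley allocation) is more detailed than what the paper writes, but it is the same construction.
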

\begin{proof}  To prove, for any arbitrary promoted $G$, we require a set of regulatory incentive rules $\Re$ such that its combination with $G$ results in a stable $C$ implementation. The algorithm for generating such a $\Re$ is presented in the proof of Proposition \ref{prop:imp_ensurer}.    
\end{proof}

Analogously, similar properties hold for \emph{avoiding} prohibited $\ISN$s  or \emph{allowing} permitted ones. Avoiding  prohibited $\ISN$s can be achieved by making the $\CISN$ (that results from introducing regulatory incentives) unimplementable. On the other hand, allowing permitted  $\ISN$s would be simply the result of adding an empty set of regulatory rules. The presented approach for incentivizing $\ISN$s, is advisable when the policy-maker is aiming to ensure the implementability of a promoted $\ISN$ in an ad-hoc way. In other words, an $\Re$ that ensures the  implementability of a promoted $\ISN$ $G_1$ may ruin the implementability of another promoted $\ISN$ $G_2$. This highlights the importance of  some structural properties for socio-economic policies that aim to foster the implementability of desired $\ISN$s. As we discussed in Section  \ref{sec:analysis}, we aim for implementing $\ISN$s such that the rationality axiom will be respected. In the following, we focus on the subtleties of socio-economic policies that are enforced by regulatory rules. The question is, what are the requirements of a policy that can ensure the rationality of staying in desired $\ISN$s? We first show that to respect the rationality axiom, promoted agent groups should be disjoint. We illustrate that in case the policy-maker takes this condition into account, industrial agents have no economic incentive to defect an implementable promoted $\ISN$.

\begin{proposition} [Mutual Exclusivity of Promoted $\ISN$s]
Let $G_1$ and $G_2$ be arbitrary $\ISN$s, respectively among  promoted (nonempty) agent groups $S_1$ and $S_2$ under policy $\wp$ (i.e., $S_1, S_2 \in P_\wp^+$). Moreover, let $\Re_1$ and $\Re_2$ be  rule sets that ensure the implementability of $G_1$ and $G_2$, respectively. For $i \in \set{1,2}$, defecting from $\CISN$ $C_i=G_i+\Re_i$  is not economically rational for any agent $a \in S_i$ iff $S_1 \cap S_2 = \emptyset$.      
\end{proposition}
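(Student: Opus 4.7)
The plan is to prove the biconditional direction by direction, leveraging the construction in Algorithm~\ref{algor} and the core-based stability arguments developed earlier. For the backward direction ($\Leftarrow$), I would assume $S_1 \cap S_2 = \emptyset$. By Proposition~\ref{prop:imp_ensurer} each $\Re_i$ forces $C_i$ to admit a stable core allocation $(x_a^i)_{a \in S_i}$. Individual rationality gives $x_a^i \geq c_i(\{a\}) = 0$ for every $a \in S_i$, and Algorithm~\ref{algor} assigns the tax $-v_j$ to every rule inapplicable to the grand coalition of $C_i$, so every proper subcoalition $T \subsetneq S_i$ has $c_i(T) = 0$. Coalitional rationality then blocks all intra-$C_i$ deviations. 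Because $a \in S_i$ is not a legitimate member of $S_{3-i}$ by disjointness, cross-$\CISN$ switching is unavailable; the only alternatives leave $a$ with at most $0 \leq x_a^i$, so defecting is not economically rational for any $a \in S_i$.

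For the forward direction ($\Rightarrow$) I would argue the contrapositive. Suppose $S_1 \cap S_2 \neq \emptyset$ and pick $a$ in the intersection. As the running scenario in Section~\ref{sec:analysis} stipulates, an industrial firm can belong to at most one $\ISN$ at a time, so $a$ must commit to exactly one of $C_1, C_2$. Comparing the payoffs $x_a^1$ and $x_a^2$ that $a$ would receive under the stable implementations of $C_1$ and $C_2$: whenever these differ, $a$ strictly prefers the higher-paying one and thus rationally defects from the lower-paying $C_i$, exhibiting an agent in $S_i$ for whom defection is economically rational and falsifying the left-hand side.

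The main obstacle I anticipate is the degenerate boundary case $x_a^1 = x_a^2$, where $a$ is payoff-indifferent yet still structurally cannot field itself in both $\CISN$s. I would resolve this by noting that the unavoidable choice of one $\CISN$ over the other is itself a (costless but genuine) withdrawal from the discarded one, so the strict reading of ``not rational to defect'' already fails without requiring strict payoff preference. A secondary subtlety is confirming that $\Re_1$ and $\Re_2$ do not inadvertently interfere through the shared agent when overlaid; this is immediate from Algorithm~\ref{algor} because each $\Re_i$ is constructed purely from coalitions in $S_i$, so the two incentive systems are structurally decoupled and the per-$\CISN$ core analysis above applies verbatim.
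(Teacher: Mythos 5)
Your proof follows essentially the same route as the paper's: the forward direction by contraposition via an agent shared between $S_1$ and $S_2$ who must abandon one implementable $\CISN$, and the backward direction by combining the stability guaranteed by $\Re_1,\Re_2$ with the absence of cross-group deviation opportunities under disjointness. Your version is somewhat more explicit (spelling out the core inequalities and the payoff-indifference edge case that the paper glosses over), but the decomposition and key ideas are the same.
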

\begin{proof}
$``\Rightarrow"$: Suppose  $S_1 \cap S_2 \neq \emptyset$. Accordingly, we have an agent $a$ which is both a member of $S_1$ and $S_2$. For $a$ it is rational to defect either $S_1$ or $S_2$ as both the two $\CISN$s that are based on the two groups are implementable.    

$``\Leftarrow"$: Suppose $S_1$ and $S_2$ are disjoint promoted agent groups under $\wp$. As $\Re_1$ and $\Re_2$ can respectively ensure the implementability of these two groups and based on Proposition \ref{prop:ISN-MCNET}, we have that $\ISN$s among firms in $S_1$ and $S_2$ are both implementable in a stable manner. Hence, they satisfy the rationality axiom. Moreover, as the two agent groups share no agent, there will be no economic incentive to deviate between the two stable $\ISN$s.    
\end{proof} 

Accordingly, given a set of industrial agents in $N$ and a socio-economic policy $\wp$  we directly have that: 

\begin{proposition} [Minimality of Promoted $\ISN$s] \label{cor_exclusiv} For $n=|P_\wp^+|$  if $\bigcap\limits^{n}_{i=1} S_i \in P_\wp^+ =\emptyset$ then any arbitrary $S_i \in P_\wp^+$ is minimal (i.e., $S'_i \not\in P_\wp^+$ for any $S'_i \subset S_i$).
\end{proposition}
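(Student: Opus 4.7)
The plan is to derive this as a direct corollary of the preceding Mutual Exclusivity proposition. That result establishes that for any two promoted groups $S_1, S_2 \in P_\wp^+$, rational implementability of both (via their associated regulatory rule sets) requires $S_1 \cap S_2 = \emptyset$. Applied across every pair of promoted groups in $P_\wp^+$, this yields that the family $P_\wp^+$ is \emph{pairwise} disjoint. I would read the stated hypothesis $\bigcap_{i=1}^{n} S_i = \emptyset$ in this stronger, pairwise sense inherited from the prior proposition, rather than in its literal global-intersection sense, since the statement is advertised as a ``direct'' consequence of that result.

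Given pairwise disjointness, minimality follows by a one-line argument by contradiction. Suppose some $S_i \in P_\wp^+$ were not minimal; then there would exist a strict nonempty subset $S'_i \subsetneq S_i$ with $S'_i \in P_\wp^+$. But then $S_i$ and $S'_i$ are two distinct elements of $P_\wp^+$ whose intersection is $S_i \cap S'_i = S'_i \neq \emptyset$, contradicting pairwise disjointness. Hence no such $S'_i$ exists, and $S_i$ is minimal in $P_\wp^+$.

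The main point is interpretative rather than technical: the claim really rests on the pairwise reading of the hypothesis inherited from the Mutual Exclusivity proposition, because the literal global-intersection condition alone would be too weak (for instance $\{1,2\}, \{3,4\}, \{1\}$ have empty triple intersection yet $\{1,2\}$ is not minimal). Once this reading is in place, no machinery beyond the preceding proposition is needed, and the proof reduces to the contradiction sketched above.
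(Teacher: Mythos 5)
Your proof is correct and matches what the paper intends: the paper offers no proof at all for this proposition (it is introduced with ``we directly have that''), and the intended argument is exactly the one-line contradiction you give---a non-minimal promoted $S_i$ would contain a promoted proper subset $S'_i$, and $S_i \cap S'_i = S'_i \neq \emptyset$ violates the disjointness of promoted groups established by the Mutual Exclusivity proposition. Your interpretative caveat is not merely pedantic but necessary: read literally, the hypothesis $\bigcap_{i=1}^{n} S_i = \emptyset$ (emptiness of the $n$-fold intersection) is strictly weaker than pairwise disjointness, and your counterexample $\{1,2\}, \{3,4\}, \{1\}$ shows the conclusion genuinely fails under the literal reading. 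So the proposition is only true under the pairwise-disjointness reading inherited from the preceding proposition, which is evidently what the authors mean. The only cosmetic gap is the case $S'_i = \emptyset$, where your contradiction does not fire; this is harmless since the paper restricts promoted groups to be nonempty, but it is worth a clause if one wants the argument airtight.
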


Roughly speaking, the exclusivity condition for promoted agent groups entails that any agent is in at most  one promoted group. Hence, deviation of agents does not lead to a larger promoted group as no promoted group is part of a promoted super-group, or contains a promoted sub-group. In the following, we show that the mutual exclusivity condition is sufficient for ensuring the implementability of all the $\ISN$s that take place among promoted groups of firms.  

\begin{theorem} [Conditioned Implementability] \label{theorem_CISN_impl}
Let $G$ be an arbitrary $\ISN_\Delta$ game under policy $\wp$ among industrial agents in $N$ and $n$ be the cardinality of $P_\wp^+$. If $\bigcap\limits^n_{i=1} S_i \in P_\wp^+=\emptyset$, then there exists a set of regulatory rules $\Re$, such that all the promoted symbiotic networks are implementable in the coordinated $\ISN$ defined by $C=G+\Re$. Moreover, any $\ISN$  among prohibited agent groups in $P_\wp^-$ will be unimplementable.  
\end{theorem}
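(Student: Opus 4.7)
The plan is to construct $\Re$ coalition-by-coalition, exploiting the disjointness of promoted groups guaranteed by Proposition \ref{cor_exclusiv}. First I would note that the hypothesis $\bigcap_{i=1}^{n} S_i \in P_\wp^+ = \emptyset$ (together with the preceding mutual-exclusivity proposition) entails that the promoted coalitions $S_1,\dots,S_n$ are pairwise disjoint and minimal in $P_\wp^+$. Hence any two rules whose positive literal sets are confined to different $S_i$'s are automatically non-interfering, because no coalition $S \subseteq N$ can simultaneously include all of one promoted group and all of another.

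Next, for each promoted $S_i$, I would build a local rule block $\Re_i$ by adapting the algorithm in the proof of Proposition \ref{prop:imp_ensurer}, but treating $S_i$ (rather than $N$) as the target coalition: for every rule $\rho_j : (\mathcal{P}_j, \mathcal{N}_j) \mapsto v_j$ of the MC-Net representation of $G$ whose positive part $\mathcal{P}_j$ lies inside $S_i$ but $\mathcal{P}_j \neq S_i$, insert a neutralizing rule $(\mathcal{P}_j, \mathcal{N}_j \cup (N\setminus S_i)) \mapsto -v_j$, and leave the rule targeted at $S_i$ itself untouched (or boost it with a subsidy if needed to make $v(S_i)+\iota(S_i) > 0$). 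Augmenting the negative literal set with $N\setminus S_i$ ensures that these incentives fire only on sub-coalitions of $S_i$, so $\Re_i$ cannot leak into any other promoted block. Restricted to $S_i$, the coordinated game $C$ then has the same structure that the algorithm of Proposition \ref{prop:imp_ensurer} produced for $N$, so the grand coalition of $S_i$ is uniquely value-bearing and therefore stably implementable.

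For the second part of the claim I would then add, for every $S \in P_\wp^-$, a rule $(S, N\setminus S) \mapsto \iota_S$ with $\iota_S$ chosen negative enough that $c(S) < \max_{a\in S} c(\{a\})$ — i.e., large enough that some member of $S$ strictly prefers defecting over participating. By the pattern $(S, N\setminus S)$ this penalty fires exactly on $S$ and nowhere else, so it does not disturb the implementability established in the previous paragraph. Collecting everything, $\Re := \bigcup_i \Re_i \;\cup\; \{\text{prohibition rules}\}$ simultaneously witnesses implementability for every promoted $\ISN$ and unimplementability for every prohibited one.

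The main obstacle I anticipate is bookkeeping around rule interference: MC-Net rules are global objects, so one must verify carefully that a rule intended to enforce stability of $S_i$ does not accidentally fire on some $S_j$ or on a prohibited coalition and restore stability there. The device of enriching every inserted rule's negative literal set with $N\setminus(\text{intended target})$ — which is permissible because the hypothesis forces disjointness — is what makes the construction modular and allows the three families of rules (stabilizers for promoted groups, destabilizers for their proper subsets, and penalties for prohibited groups) to be combined without cross-contamination. Once this localization is in place, stability on each promoted $S_i$ and instability on each prohibited $S$ follow directly from the argument used in Proposition \ref{prop:imp_ensurer}.
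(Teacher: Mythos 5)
Your proposal is correct and follows essentially the same route as the paper: run the rule-generating algorithm of Proposition~\ref{prop:imp_ensurer} once per promoted group $S_i$, take $\Re=\bigcup_i \Re_i$, invoke the mutual-exclusivity/minimality of the $S_i$ to rule out interference, and rely on taxation to destabilize prohibited coalitions. You are in fact somewhat more explicit than the paper, which simply says ``call the provided algorithm'' without spelling out how to retarget it from $N$ to each $S_i$; your device of padding each inserted rule's negative literal set with $N\setminus S_i$, and your separate penalty rules for prohibited groups, make precise details that the published proof leaves implicit.
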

\begin{proof} To prove, we provide a method to generate such an implementability ensuring set of rules. We start with an empty $\Re$. Then for all $n$ promoted $S_i \in P_\wp^+$, we call the provided algorithm in Proposition \ref{prop:imp_ensurer}. Each single run of this algorithm results in a $\Re_i$ that guarantees the implementability of the industrial symbiosis  among the set of firms in the promoted group  $S_i$. As the set of promoted agent groups comply to the mutual exclusivity condition, the unification of all the regulatory rules results in a general $\Re$. Formally, $\Re= \bigcup\limits_{i=1}^n \Re_i$. Moreover, as the algorithm applies taxation on non-promoted groups, no $\ISN$ among prohibited agent groups will be implementable.     
\end{proof}

\begin{example} [$ijk$ as a Normatively Coordinated  $\CISN$] \label{ex:just-tax} Recalling the  $\ISN$ scenario in Example \ref{ex-policy}, the only promoted group is the grand coalition while other possible agent groups are prohibited. To ensure the implementability of the unique promoted group and to avoid the implementability of other groups, the result of executing our algorithm is $\Re= \{\rho_1 : (ij, k) \mapsto -4,  \rho_2 : (ik, j) \mapsto -5, \rho_3 : (jk, i) \mapsto -4 \}$. In the $\CISN$ that results from adding $\Re$ to the original $\ISN$, industrial symbiosis among firms in the promoted group is implementable while all the prohibited groups cannot implement a stable symbiosis.   
\end{example}

\subsection{Realized ISNs and Budget-Balancedness}
As we mentioned in the beginning of Section \ref{sec:5}, regulations are  norms that in case of agents' compliance bring about the desired behavior. For instance, in Example \ref{ex:just-tax}, although according to the provided tax-based rules, defecting the grand coalition is not economically rational, it is probable that agents act irrationally---e.g., due to trust-/reputation-related issues---and go out of the promoted group. This results in possible normative behavior of a $\CISN$ with respect to  an established policy $\wp$.  So, assuming that based on evidences the set of implemented $\ISN$s are realizable, we have the following abstract definition of $\CISN$'s normative behavior under a socio-economic policy. 

\begin{definition} [$\CISN$'s Normative Behavior] 
Let $C$ be a $\CISN$ among industrial agents in $N$ under policy $\wp$ and let $E$ be the evidence set that includes all the implemented $\ISN$s among agents in $N$. We say  the behavior of $C$ complies to $\wp$ according to $E$ iff $E=P_\wp^+$; and violates it otherwise. 
\end{definition}

Given an $\ISN$ under a policy, we introduced a set of regulatory rules to ensure that all the promoted $\ISN$s will be implementable. However, although providing incentives makes them implementable, the autonomy of industrial agents may result in situations that not all the promoted agent groups implement their $\ISN$. So, although we can ensure the implementability of all the promoted $\ISN$s, the real behavior may deviate from a desired one. As our introduced method for guaranteeing  the implementability of $\ISN$s among promoted agent groups is mainly tax-based, if a $\CISN$ violates the policy, we end up with collectible tax values.  In such cases, our tax-based method can become a \emph{balanced-budget} monetary incentive mechanism (as discussed in \cite{guo2008optimal,li2003mechanisms,phelps2010evolutionary}) by employing a form of ``Robin-Hood'' principle and  redistributing the collected amount among promoted agent groups that implemented their $\ISN$. In the following, we provide an algorithm that guarantees budget-balancedness by means of a Shapley-based redistribution of the collectible tax value among agents that implemented promoted $\ISN$s.  

\begin{algorithm}[ht]
\begin{algorithmic}[1]
\STATE \textbf{Data:} $C=G+\Re$ the $\CISN$ game among industrial agents in $N$ under policy $\wp$ such that all the $\ISN$s among promoted groups in $P_\wp^+$ are implementable; $E$ the set of implemented $\ISN$s; The collectible tax value $\tau$.
\STATE \textbf{Result:} $\Omega_i(C,\wp)$ the distributable incentive value to $i \in N$.
\STATE  $S^+ \gets E \cap P_\wp^+$ , $S^+_{u} \gets  \bigcup\limits_{S \in S^+} S$

\FORALL{$i \in (S^+_{u},v)$ the sub-game of $G$}
    \STATE{$k \gets \Phi_i(v)$ the Shapley value of $i$ in $(S^+_{u},v)$}
    \STATE{$\Omega_i(C,\wp)= \frac{1}{v(S^+_{u})}.\tau.k$}
\ENDFOR

\end{algorithmic}
\caption{Tax Redistribution for $\CISN$ game $C$.}
\label{algor2}
\end{algorithm}

The correctness of Algorithm \ref{algor2} is established in Proposition \ref{prop:bb}.

\begin{proposition}[Budget Balancedness and Fairness]  \label{prop:bb}
Let $C=G+\Re$ be a $\CISN$ among industrial agents in $N$ under policy $\wp$ such that all the $\ISN$s among promoted groups are implementable (using the provided method in Theorem \ref{theorem_CISN_impl}) and let $E$ be the set of implemented $\ISN$s. For any  $\CISN$,  the incentive values returned by Algorithm \ref{algor2} ensures budget balancedness while preserving fairness (i.e., EFF, SYM, DUM, and ADD).  
\end{proposition}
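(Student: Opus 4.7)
The plan is to exploit the fact that $\Omega_i(C,\wp)$ is just a rescaling of the Shapley value $\Phi_i(v)$ of the sub-game $(S^+_u,v)$ by the constant factor $\tau / v(S^+_u)$. Because this factor does not depend on $i$, every axiom that is linear in the payoff vector of Shapley will transfer to $\Omega$. So the structure of the proof will be: first, verify budget-balancedness directly from the Efficiency axiom of the Shapley value on the sub-game; then, verify each of EFF, SYM, DUM, ADD for $\Omega$ by reducing it to the corresponding property of $\Phi$.

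First I would establish budget-balancedness. Summing the formula in line 6 of Algorithm \ref{algor2} over all $i \in S^+_u$ yields
\[
\sum_{i \in S^+_u} \Omega_i(C,\wp) \;=\; \frac{\tau}{v(S^+_u)} \sum_{i \in S^+_u} \Phi_i(v) \;=\; \frac{\tau}{v(S^+_u)} \cdot v(S^+_u) \;=\; \tau,
\]
where the middle equality is exactly the EFF axiom applied to the Shapley value on the sub-game $(S^+_u, v)$. Agents outside $S^+_u$ receive $0$ by construction (they did not contribute to any implemented promoted ISN), so the total redistributed amount equals exactly the collected tax $\tau$, which is budget-balancedness.

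Second, I would argue the fairness axioms. EFF for $\Omega$ amounts to the balancedness statement itself (the sub-coalition $S^+_u$ receives the entire incentive mass). SYM follows because if agents $i,j$ make identical marginal contributions in the sub-game $(S^+_u,v)$ then $\Phi_i(v) = \Phi_j(v)$, and since $\Omega$ is a uniform rescaling of $\Phi$ we get $\Omega_i(C,\wp) = \Omega_j(C,\wp)$. For DUM, if $i$ is a dummy then $\Phi_i(v) = v(\{i\})$, which is $0$ in an ISN game by Definition \ref{def:ISR-Games}; hence $\Omega_i(C,\wp) = 0 = v(\{i\})$, as required. ADD follows because $\Phi_i$ is additive in the characteristic function and the constant $\tau/v(S^+_u)$ is pulled out of the sum, so $\Omega_i(C+C',\wp) = \Omega_i(C,\wp) + \Omega_i(C',\wp)$ whenever the sub-game domain $S^+_u$ is fixed.

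The only real obstacle is a bookkeeping one: one must be careful that the \emph{redistribution} is evaluated on the sub-game $(S^+_u,v)$ of $G$ (not on the full $\CISN$ game $C$), so that the Shapley efficiency collapses to $v(S^+_u)$ and cancels the denominator cleanly. One should also verify that $v(S^+_u) > 0$ so the division is well-defined; this holds because $S^+_u$ is the union of implemented promoted ISNs, each of which has strictly positive symbiotic value by hypothesis and by superadditivity of $\ISN$ games. With that sanity check dispatched, all four axioms and the balanced-budget identity follow, completing the argument.
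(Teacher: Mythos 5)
Your proof is correct and follows essentially the same route as the paper's: both arguments rest on the observation that $\Omega_i(C,\wp)$ is a uniform rescaling of the Shapley value on the sub-game $(S^+_u,v)$, so that the EFF axiom collapses the sum to $\tau$ (budget balancedness) and SYM, DUM, and ADD transfer through the constant factor. Your version is in fact more explicit than the paper's sketch---it carries out the summation, checks that $v(S^+_u)>0$ so the division is well-defined, and notes the caveat that ADD only transfers when the scaling factor is held fixed, none of which the paper's proof spells out.
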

\begin{proof}  To have budget balancedness, we have to show that the total collectible   tax value (using the provided method in Theorem \ref{theorem_CISN_impl}) is equal to allocated subsidies. (Considering a disposal account---under control of the regulatory agent---for each firm, it is reasonable to assume that \emph{collectible} $\tau$ is equal to \emph{collected} $\tau$.) If the $\CISN$ is $\wp$-compliant, this is obvious as $\tau$ is equal to  zero (thanks to the implementation of all the promoted $\ISN$s). When the $\CISN$ is $\wp$-violating, we use the Shapley value of each agent that contributes to the sub-game of implemented promoted $\ISN$s. As we employ a Shapley-based method, the monetary incentive is budget-balanced thanks to the EFF property and in addition preserves the other three properties (i.e., SYM, DUM, and ADD).      
\end{proof}

Note that the redistribution phase takes place after the implementation of the $\ISN$s and with respect to the evidence set $E$. Otherwise, there will be cases in which the redistribution process provides incentives for agent groups to defect the set of promoted collaborations. Moreover, we highlight that the use of an MC-net representation of games enables calculating the Shapley value in a scalable manner---in Algorithm \ref{algor2} (see \cite{ieong2005marginal} for complexity results). In specific, the Shapley value of an agent in MC-nets is equal to the summation of its Shapley values in each MC-net rule. Accordingly, as in each rule the value can be computed in linear time (in the pattern of the rule), the Shapley value computation for the whole game will be linear in the size of the input. 

\section{Concluding Remarks} \label{sec:conclusions}
This paper provides a coordinated multiagent system---rooted in cooperative game theory---for implementing $\ISN$s that take place under a socio-economic policy. The use of rule-based MC-Net representation of cooperative games enables combining the game with the set of policies and regulations in a natural way. The paper also provides  algorithms that generate regulatory rules to ensure the implementability of ``good'' symbiotic collaborations in the eye of the policy-maker. This extends previous work that merely focused on operational aspects of industrial symbiotic relations--as we introduce the analytical study of the regulatory aspect of $\ISN$s. Finally, it introduces a method for redistribution of collectible tax values. The presented method ensures the budget-balancedness of the monetary incentive mechanism for coordination of $\ISN$s in the implementation phase.

In practice, such a framework supports decision-makers in the $\ISN$ implementation phase by providing operational semantics as tools for reasoning  about the implementability of a given $\ISN$ in a fair and stable manner. Moreover, it supports policy-makers aiming to foster socio-economically desirable  $\ISN$s---by providing  algorithms that generate the required regulatory rules. Finally, it shows that MC-Net is an expressive representation framework for applying normative coordination mechanisms to  multiagent systems.       

This paper focuses on a unique socio-economic policy and a set of rules to ensure it. In this regard, one question that deserves investigation is the possibility of having multiple policies and thus analytical tools for policy option analysis \cite{sara2017structured} in $\ISN$s. Such a framework assists ranking and investigating the applicability of a set of policies in a particular $\ISN$ scenario. Along this line, we aim to  generate  a regulation toolbox  for $\ISN$ policy-makers---since a single regulation may be incapable of ensuring all the desired collaborations under potentially conflicting policies. In that case, possible conflicts among regulations can be resolved using prioritized rule sets (inspired by methods for dealing with potential extensions in argumentation theory \cite{modgil2013general,kaci2008preference}). Accordingly, we will have distinguishable potential $\ISN$ worlds such that in each a maximal set of promoted $\ISN$s are implementable. Another approach to address this is by building upon  multiagent-based Delphi implementations \cite{garcia2008multi}  and generating sets of  nonconflicting, collectively acceptable rules. Here, we consider firm managers as the panel of experts. This can be integrated with agent-oriented representation  methods \cite{garcia2013collection} to analyze how panel transformations affect the implementability  of socio-economic policies. 

In future work, we also aim to focus on administration of $\ISN$s. Then, information-sharing  issues \cite{carter2002reputation,fraccascia2018role} and compliance of involved agents to their commitments will be main concerns for automated trading and business implementations in multiagent industrial symbiosis systems \cite{ash2004empowering}. For that, we plan to model $\ISN$s as normative multiagent organizations  \cite{DBLP:conf/eumas/YazdanpanahYZ16,boissier2013organisational} and integrate data-driven coordination techniques \cite{DBLP:journals/arc/WardiCC18}. Thence, we can rely on norm-aware organization frameworks that focus on operation of normative  organizations \cite{dastani2016commitments,aldewereld2007operationalisation,DBLP:conf/prima/DastaniSY17} to monitor the organization's behavior. Finally, we aim to illustrate the validity of our formally verified framework using realistic case studies and multiagent-based simulations  \cite{fraccascia2017efficacy}. 

\section*{Acknowledgments}
\emph{SHAREBOX} \cite{sharebox}, the project leading to this work, has received funding from the European Union's \emph{Horizon 2020} research and innovation programme under grant agreement No. 680843.

\bibliographystyle{apalike}
\bibliography{mybibfile}

\end{document}